\documentclass[12pt,a4paper]{article}
\pdfoutput=1
\usepackage{fourier}
\usepackage[T1]{fontenc}
\usepackage[protrusion=true,expansion=true]{microtype}
\usepackage{geometry}
\usepackage{graphicx}
\usepackage{amsthm}
\usepackage[cmex10]{amsmath}
\usepackage{natbib}
\usepackage{amssymb}
\usepackage{dcolumn}
\usepackage{booktabs}
\usepackage{bm}
\usepackage{mathtools}
\usepackage{enumitem}
\usepackage[ruled,vlined]{algorithm2e}
\usepackage[format=hang, labelfont=bf,
tableposition=top, figureposition=bottom]{caption}
\usepackage{parskip}
\usepackage[title]{appendix}
\usepackage{fullpage}
\usepackage{fancyhdr}
\usepackage{setspace}
\usepackage{float}
\hyphenation{token}
\geometry{a4paper, heightrounded}
\geometry{top=1in, bottom=1.25in, left=1.25in, right=1.25in}
\clubpenalty=9996
\widowpenalty=9999
\brokenpenalty=4991
\predisplaypenalty=10000
\postdisplaypenalty=1549
\displaywidowpenalty=1602
\setlength{\parindent}{1.5em}
\makeatletter
\def\thm@space@setup{%
	\thm@preskip=\parskip \thm@postskip=0pt
}
\makeatother
\newcolumntype{d}[1]{D{.}{\cdot}{#1}}
\numberwithin{equation}{section}
\mathtoolsset{showonlyrefs}
\theoremstyle{plain}
\newtheorem{theorem}{Theorem}[section]
\newtheorem{corollary}{Corollary}[theorem]

\newtheorem{example}{Example}

\title{\textbf{
		Exchanging Goods Using Valuable Money
}}

\author{J. V. Howard%
	\thanks{Department of Mathematics,
		London School of Economics, UK. \texttt{  j.v.howard@lse.ac.uk}}
}%
\date{11 September, 2023}
\begin{document}
	\maketitle
	\begin{abstract}
		A group of people wishes to use money to exchange goods efficiently over several time periods. However, there are disadvantages to using any of the goods as money, and in addition fiat money issued in the form of notes or coins will be valueless in the final time period, and hence in all earlier periods. Also, Walrasian market prices are determined only up to an arbitrary rescaling. Nevertheless we show that it is possible to devise a system which uses money to exchange goods and in which money has a determinate positive value. In this system, tokens are initially supplied to all traders by a central authority and recovered by a purchase tax. All trades must be made using tokens or promissory notes for tokens. This mechanism controls the flow as well as  the stock of money: it introduces some trading frictions, some redistribution of wealth, and some distortion of prices, but these effects can all be made small.
	\end{abstract}
	\textbf{JEL Codes:} D47, D82.
	\par
	\noindent
	\textbf{Keywords:} Hahn's problem, mechanism design, general equilibrium, pure exchange economy.
\section{Introduction}
\label{sec:introduction}
A fundamental problem in economics is to describe how a group of people could exchange goods or services efficiently to their mutual benefit. Goods can be sold by auction, but that requires money. General equilibrium theory (Walras, Arrow-Debreu) shows how markets can work to exchange homogeneous and divisible goods, but the basic theory has no place for money. However, we observe in practice that money is extremely useful in decentralising economic transactions in both space and time. \cite{Hahn1965} asked whether it is possible to construct GE models where money has to have a positive value in any equilibrium solution (Hahn's problem).
\par
In essence, the problem arises because traders are prepared to exchange goods or services for money only because they believe that in the future they will be able to exchange the money for other goods or services. This implies that fiat money can work as a medium of exchange only when traders can use money to buy something of value from the government (so that money is cleared from the economy). Otherwise, if there were only a (known) finite number of time periods, money would be valueless on the last day, and hence on all days. Even when time is endless, there will still be equilibria in which the value of money is zero.
\par
Economists have devised models in which money has positive value (see Section~\ref{sec:related}), but in this paper we will take an alternative approach. We will think of the problem as one of mechanism design or economic engineering, and so we will try to design an economic system in which money is used in all trades, and where the value of money can be controlled by society from one time period to another. We believe that even a single very simplified model showing how such a system could work (proof of concept) is of theoretical interest.
\par
So we start by defining such a system for a pure exchange economy. The mechanism requires that in every period all traders are supplied by the central bank with `gold standard' money in the form of tokens. They then trade using either tokens or promissory notes for tokens. All trade must be an exchange of money for goods: barter is not allowed. All trades of money for goods incur a fixed percentage purchase tax which must be paid in tokens. The number of tokens supplied each period increases slightly over time, so that there is a continual small but controlled inflation, which discourages traders from hoarding tokens. This, in brief, defines the system.
\par
Our claim is that in such an economy it is possible to calculate not just what goods would be traded and in what quantities, but also at what prices. To do this, we do not need any new theory, we can just analyse any particular well-defined economy using standard Walrasian general equilibrium theory, with small modifications to allow for the effects of the purchase tax. Our overall conclusion is that if the purchase tax is small, the new system will have an equilibrium close to an equilibrium of the original exchange economy, but with determinate price levels in tokens. (We will also demonstrate that extreme cases can occur in which there is no equilibrium.)
\par
It might be argued that traders could just avoid using the system by bartering. However, in practice it seems that people are extremely wedded to trading using money -- even for black market or criminal transactions. They might also try to avoid paying the purchase tax, but tax evasion is a problem that countries are used to grappling with: no doubt some evasion will always take place, but its scope may be limited.
\par
Since we are not presenting any new theory (except possibly for showing how to calculate the effects of a purchase tax on trading volumes), we will proceed by an extended numerical example. Although very simple (two goods, three traders), it is difficult to calculate the solutions without the use of a computer. The algorithm we use is described in the Appendix. We suggest that anyone wishing to check our calculations should use a program such as Mathematica, Maple, or R.
\par
In order to give more motivation, we present the example in the form of a narrative about some desert island economies. The islanders consider various proposed systems, but one island tries the mechanism described above. We also show how the standard Edgeworth box diagram can be modified to analyse the equilibrium when there are only two goods and two traders. Finally we briefly survey some related work.
\section{An imaginary commodity}
\label{sec:example}
One possible story would be about a group of families who help each other with baby-sitting, taking the children to school, car sharing for commuting, and so on. Some people begin to suspect that others are wilfully exploiting the community by taking out more than they contribute. So the idea is proposed of introducing some sort of local currency within the group to keep track of debts and credits. But this scenario is still not sufficiently well defined for our purposes. We are looking for an even simpler (if less realistic) setting.
\par
So let us imagine that after a shipwreck a group of people is stranded on a desert island. Fortunately everyone is unharmed, and all escape from the ship with a box of perishable food. They establish contact with the rest of the world, who inform them that they will be rescued by boat in one or two week's time. If they have to stay a second week, fresh food boxes will be dropped by parachute. Meanwhile they are invited to enjoy their enforced holiday, and share their food.
\par
However, they have all acquired a strong feeling of property rights over their own food boxes. Some have a good mixture of items, but others have larger quantities of only one or two fruits or vegetables. As people's tastes and appetites differ, there are obvious benefits to trade, but they have no common currency. They decide it would be an interesting exercise to devise a system for exchanging the foodstuffs, a system which will start and end on the island, and leave no outstanding debts when they leave.
\par
They reject almost immediately the idea of looking for a rare object (perhaps an unusual shell) to be used as money: there would be too much danger that someone might find a place where the shells were plentiful, thereby becoming an instant millionaire and also debasing the currency. Nominating one foodstuff to be used as the medium of exchange is considered more seriously, but there seems no suitable choice. Giving everyone a starting stock of centrally produced banknotes would not work: nobody would want to be left holding the notes at the end of the week (when they would be worthless), so prices would be bid up indefinitely in the markets. However, they know there is another way of creating money: debt.
\par
They understand that it should be possible for the traders themselves to create money by writing promissory notes. But IOU notes for what? Many countries once traded using IOU notes for gold or silver. But, since leaving the gold standard, countries like the US and the UK effectively issue notes for an imaginary commodity (dollars or pounds). They decide to follow this route.
\par
So the newly created finance committee announces that the island's currency unit will be the crown, and that they expect a crown to be about the price of an apple or a pear. They plan to invite those who wish to trade to take their goods to market, where some sort of Walrasian auction process will take place, leading to suggested prices for all goods. Traders will then be asked to create money by writing and signing IOU notes for one crown. Traders can keep their possessions secret, but the number of notes they have created will be public knowledge. Well known traders in good standing will have their notes accepted by everyone, and so paper money will be created. (In Scotland three Scottish banks as well as the Bank of England can issue banknotes.) In fact we will assume that all traders are in good standing, and so everyone can create money. Trading will then take place using these notes as money, until everyone has the goods they want plus a bundle of other people's IOU notes.
\par
At times during the trading period a person may have accumulated more debts than credits, but he must ensure that by the close of the period he can match his debts (his IOU notes) with credits (other people's notes). In fact, everybody must arrange to finish trading with their transactions exactly balanced. Then, at the close of trade, outstanding debts must be reconciled.
\par
So a trader with an IOU note will visit the issuer of the note and demand payment. In general the issuer will take back and destroy her own note and give in exchange another note issued by someone else, but eventually the trader should receive one of his own notes, which he must then destroy. Assuming that everyone trades honestly, all debts will eventually clear, and all the money created to facilitate trade will be destroyed.
\par
However, this system has vulnerabilities. In Britain, if you take a \pounds 20 note into a bank, and ask them to honour the pledge written on it (``I promise to pay the bearer on demand the sum of twenty pounds''), you are likely at best to be given another \pounds 20 note (or 2 ten pound notes, or 20  one pound coins, and so on). Correspondingly, what if a trader pays off his debt by simply writing a fresh IOU note. The finance committee could make a rule saying that debts must be paid using another person's notes, but then there could be scams involving two traders acting in collusion. The committee decides that the first day will be allowed for trading, and that after that day nobody can issue more notes. The process of resolving debt is then grounded.
\par
We might describe this simple system as ``money as debt for an imaginary commodity''. It seems workable (with suitable penalties for cheating), but it determines only relative prices. (If the prices of all goods were doubled, there would be an equally good equilibrium, probably requiring the creation of about twice as much money.) We could try to fix prices by demanding that the IOU notes must be written for one particular good (e.g. apples), but when we come to look at two period problems, there may not be any apples in the second period, while we want the general value of money to remain much the same.
\par
In the extreme version of this scheme, everyone starts by creating as much money as he needs for all his purchases. He then trades, buying goods with his money and receiving other people's money for his sales. Each note is used exactly once on day one, and all notes are destroyed on day two.
\par
We now introduce our simple numerical example, which we will modify later when we see how we might give money a determinate value.
\begin{example}
	\label{ex:original}
	We suppose that there are equal large numbers of three types of trader $A$ (the rich), $C$ (the poor), and $B$ (the middle class), and just two goods (apples and pears). Type $A$ starts with $90$ apples and $30$ pears. Type $C$ starts with $10$ apples and $70$ pears. Type $B$ starts with $50$ of each. They all have the same utility function of their holdings of apples, $x$, and pears, $y$.
	\begin{equation*}
		u(x, y) = \sqrt{x} + \sqrt{y} \text{ .}
	\end{equation*}
\end{example}
So the rich start with a utility of $15.0$, the poor with $11.5$, and the middle class with $14.1$. In a market equilibrium, apples and pears will have the same price, and $A$ (types) will buy $30$ pears from $C$ (types) and sell $30$ apples to them, until everyone has equalised their holdings of the two fruits. $A$'s utility will then be $15.5$, $C$'s will be $12.6$, and $B$ will remain at $14.1$. Figure~\ref{fig:original} shows these movements in quantity space. Traders $A$ and $C$ start at a blue square and move along their green budget constraint to maximise their utility at a red circle.
\begin{figure}[H]
	\centering
	\includegraphics[scale=1.0, clip]{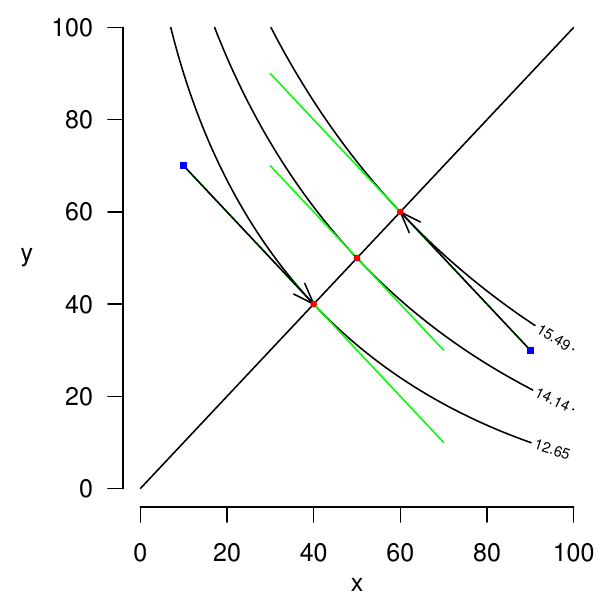}
	\caption{Movements to market equilibrium}
	\label{fig:original}
\end{figure}
\par
Let the prices of apples and pears be $p$ and $q$ respectively. Figure~\ref{fig:originalxs} shows contours of the excess demand for apples in price space. (Blue contours, with slope greater than $1$,  have positive excess demand; red contours, with slope less than $1$, negative.) On the green diagonal $p = q$ the supply of apples equals the demand. If we plotted the excess demand for pears, the contour of zero excess demand would also be $p = q$. The two equations in $p$ and $q$ determine only the ratio $p:q$.
\begin{figure}[H]
	\centering
	\includegraphics[scale=1.0 , clip]{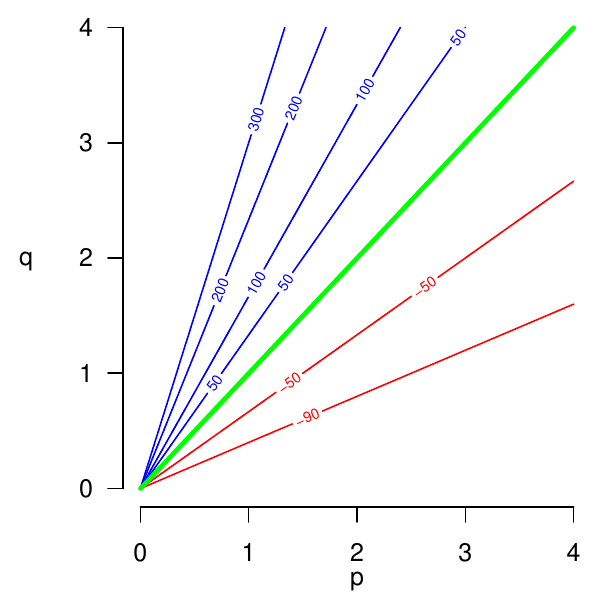}
	\caption{Excess demand contours}
	\label{fig:originalxs}
\end{figure}
\par
So we have a simple system based on traders creating their own money, IOU notes for crowns. There are no real crowns. Apart from the initial suggestion for the approximate value of a crown, the finance committee has no further influence on the general price level. We have described a Walrasian pure exchange economy, and general equilibrium theory tells us that at least one equilibrium will exist under simple assumptions (which our model satisfies).
\section{Real money}
\label{sec:tokens}
\par
Suppose now that the finance committee on another (almost identical) desert island prefers not to use the system described in Section~\ref{sec:example}, because of the difficulty of maintaining the value of money over several periods, but instead tries the system outlined in Section~\ref{sec:introduction}. It sets up a central bank which then gives each trader a fixed number ($n$) of real crowns. These real crowns are to be the tokens referred to in Section~\ref{sec:introduction}, and we will refer to crowns rather than tokens in our examples. Whenever the trader spends a crown on food, he most return $r$ crowns to the bank ($r < 1$). On the first day all trades must be made using real crowns or IOU notes for crowns, and the central bank can similarly be paid in real money or promissory notes. On day two (settlement day) debts are reconciled as before, except that debts to the central bank must be paid in real crowns.
\begin{example}
	Continuing with the situation in Example~\ref{ex:original}, suppose now that before trading all traders are given $n = 6$ crowns, and instructed to return one crown to the bank for each five crowns spent purchasing goods (i.e. $r = 1/5$).
\end{example}
\par
Suppose that the price of each fruit is (say) $2$ crowns. (These are not equilibrium prices.) Consider a type $A$ trader who starts with $90$ apples, $30$ pears, and $6$ real crowns. She could consider buying more fruit.  Because of the $20\%$ purchase tax, she sees a price of $2.4$ rather than $2$, so she can buy $5/2$ fruits, paying $5$ crowns and keeping $1$ crown to pay the bank. The green line in Figure~\ref{fig:tokens} shows the possibilities using this strategy.
\begin{figure}[H]
	\centering
	\includegraphics[scale=1.0 , clip]{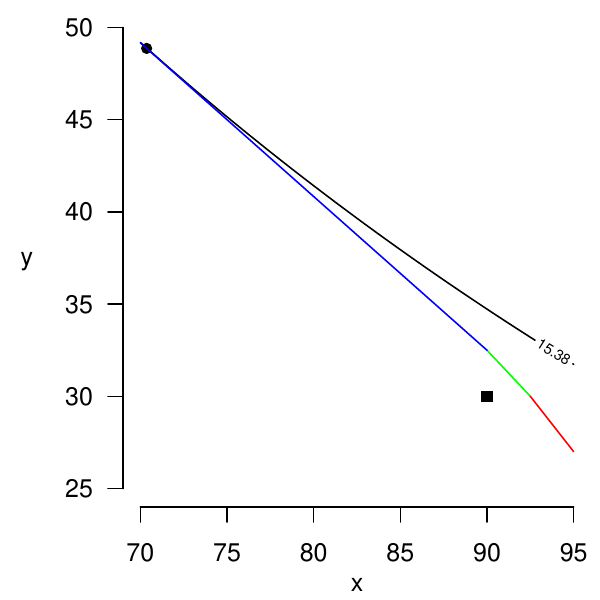}
	\caption{Budget constraints when $r = 20\%$}
	\label{fig:tokens}
\end{figure}
\par
She next considers selling $- \delta x$ apples and buying as many pears as she can with the proceeds and her $6$ crowns. After the sale she will have capital worth $6 - 2 \delta x$ crowns, so she can buy $\delta y = (6 - 2 \delta x)/2.4$ pears. Her possible trades must satisfy $2 \delta x + 2.4 \delta y = 6$. So if she finishes with holdings of $(x, y)$ we must have that
\begin{align*}
2(x - 90) + 2.4(y - 30) = 6 \\
2x + 2.4y = 258 \text{ .}
\end{align*}
The blue line in Figure~\ref{fig:tokens} shows these possibilities.
\par
Similarly, she could sell $- \delta y$ pears and then buy as many apples as she could. After the sale she would have capital worth $6 - 2 \delta y$ crowns, so she could buy $ \delta x = (6 - 2 \delta y)/2.4$ pears. Her possible trades must satisfy $2.4 \delta x + 2 \delta y = 6$. So if she planned to finish with holdings of $(x, y)$ we must have that
\begin{align*}
2.4(x - 90) + 2(y - 30) = 6 \\
2.4x + 2y = 282 \text{ .}
\end{align*}
These possible outcomes lie on the red line in Figure~\ref{fig:tokens}.
\par
In the general case, if a trader purchases $ \delta x_{i}$ of good $i$ ($i = 1, \ldots, k$) at unit price $p_{i}$, with no initial capital and no purchase tax her budget constraint would be
\begin{equation*}
	\sum_{i} p_{i} \delta x_{i} \le 0 \text{ .}
\end{equation*}
After giving her capital of $n$ crowns and introducing purchase tax at rate $r$, she will have $2^{k} - 1$ constraints, one for each non-empty set $S \subseteq \{1, \ldots, k\}$, of the form
\begin{equation*}
	\sum_{i} p_{i} \delta x_{i}\left(1 + \phi_{i} r\right) \le n \text{ ;}
\end{equation*}
where $\phi_{i}$ is an indicator variable for membership of $S$ (the set of goods the trader wishes to buy). Essentially, she must pay more to buy a good than she receives from selling it: trade is not frictionless. Since the budget set (assuming free disposal) is convex and her preferences are strictly convex, there will be a unique optimum point for the trader at the given prices. It may be a corner point: for a range of prices, an $A$ trader may wish to use all his crowns to buy more pears, but not wish to sell any apples to buy yet more pears. 
\par
The utility contour shown in the figure is the highest Player $A$ can attain at the assumed prices: she achieves it by moving to the point marked with a circle . In fact, straightforward calculations show that she will sell $19.64$ apples and buy $18.86$ pears. She receives $39.27$ crowns for the apples, and she started with $6$ crowns. The pears cost her $37.73$ crowns (paid to the seller), and she must also give $7.55$ crowns to the bank. Her budget is balanced. Her demands for apples and pears are shown in the second column of Table~\ref{tab:demandunits}. Her balance sheet is shown in the corresponding column of Table~\ref{tab:demandgroats}.
\begin{table}[H]
	\centering
	\caption{Demand (units) for goods by trader}
	\begin{tabular}{d{3.2}d{3.2}d{3.2}d{3.2}d{3.2}}
		\toprule%
		& \multicolumn{3}{c}{Trader}\\
		\cmidrule{2 - 4}
		\multicolumn{1}{c}{Good} & \multicolumn{1}{c}{$A$} & \multicolumn{1}{c}{$B$} & \multicolumn{1}{c}{$C$} & \multicolumn{1}{c}{Total} \\
		\midrule
		\multicolumn{1}{r}{Apples} & -19.64 & 1.25 & 22.20 & 3.81\\
		\multicolumn{1}{r}{Pears} & 18.86 & 1.25 & -23.64 & -3.52\\
		\bottomrule
	\end{tabular}
	\label{tab:demandunits}
\end{table}
\begin{table}[H]
	\centering
	\caption{Balance sheets of traders}
	\begin{tabular}{d{3.2}d{3.2}d{3.2}d{3.2}d{3.2}}
		\toprule%
		& \multicolumn{3}{c}{Trader}& \\
		\cmidrule{2 - 4}%
		\multicolumn{1}{c}{Good} & \multicolumn{1}{c}{$A$} & \multicolumn{1}{c}{$B$} & \multicolumn{1}{c}{$C$} & \multicolumn{1}{c}{Total} \\
		\midrule%
		\multicolumn{1}{r}{Initial} & 6.00 & 6.00 & 6.00 & 18.00\\
		\midrule%
		\multicolumn{1}{r}{Apples} & 39.27 & -2.50 & -44.39 & -7.62\\
		\multicolumn{1}{r}{Pears} & -37.73 & -2.50 & 47.27 & 7.05\\
		\multicolumn{1}{r}{Tax} & -7.55 & -1.00 & -8.88 & -17.42\\
		\midrule%
		\multicolumn{1}{c}{Final} & 0 & 0 & 0 & 0\\
		\bottomrule
	\end{tabular}
	\label{tab:demandgroats}
\end{table}
\par
In a similar way, $C$ will sell $23.64$ pears and buy $22.20$ apples and pay $8.88$ tax to the bank.
\par
$B$ will clearly use his $6$ crowns to buy $3/2.4 = 1.25$ apples and $1.25$ pears. He pays $5$ crowns for the fruit and gives $1$ crown to the bank.
\par
We now see that at the assumed prices, the total demand for apples is $22.20 + 1.25 = 23.45$ The total supply is $19.64$. The price of apples will tend to rise. The supply of pears exceeds the demand: the price of pears will tend to fall. But we now have a third good: real crowns (needed to pay purchase tax to the central bank). The supply of real crowns is clearly 18, and Table 2 shows that the demand is 17.42. There is an oversupply of real money, and the general price level will tend to rise. This means that we may expect that the final price for apples will be higher than 2, but remain uncertain about the final price for pears.
\par
We look for an equilibrium keeping the purchase tax rate, $r$, at $1/5$. We find that we must have $p = 2.075$ and $q = 2.022$. (The Appendix describes how to calculate these prices.)
\par
We can then calculate the revised balance sheets shown in Table~\ref{tab:eqdemand}.
\begin{table}[H]
	\centering
	\caption{Equilibrium balance sheets of traders}
	\begin{tabular}{d{3.2}d{3.2}d{3.2}d{3.2}d{3.2}}
		\toprule%
		& \multicolumn{3}{c}{Trader}& \\
		\cmidrule{2 - 4}
		\multicolumn{1}{c}{Good} & \multicolumn{1}{c}{$A$} & \multicolumn{1}{c}{$B$} & \multicolumn{1}{c}{$C$} & \multicolumn{1}{c}{Total} \\
		\midrule%
		\multicolumn{1}{r}{Initial} & 6.00 & 6.00 & 6.00 & 18.00\\
		\midrule%
		\multicolumn{1}{r}{Apples} & 43.64 & 0.00 & -43.64 & 0\\
		\multicolumn{1}{r}{Pears} & -41.36 & -5.00 & 46.36 & 0\\
		\multicolumn{1}{r}{Tax} & -8.27 & -1.00 & -8.73 & -18.00\\
		\midrule%
		\multicolumn{1}{c}{Final} & 0 & 0 & 0 & 0\\
		\bottomrule
	\end{tabular}
	\label{tab:eqdemand}
\end{table}
In this solution, trader $B$ maximises his utility at a corner point of his budget set (he buys only  pears). We can recalculate Figure~\ref{fig:original}. ($B$'s movement is too small to show with an arrow.)
\begin{figure}[H]
	\centering
	\includegraphics[scale=1.0, clip]{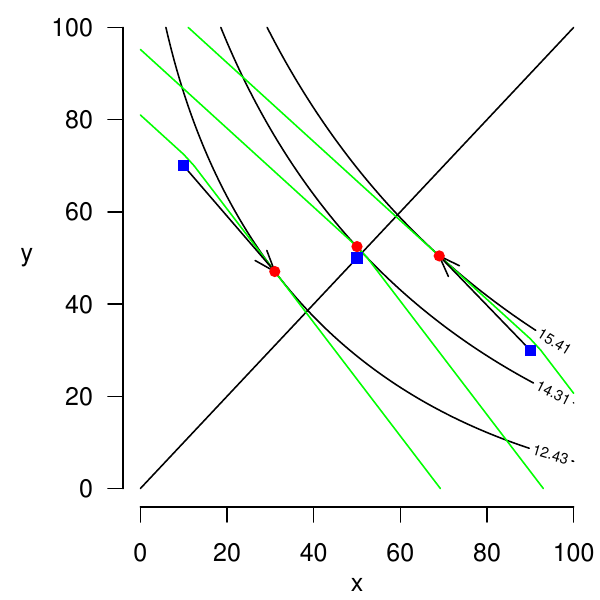}
	\caption{Revised movements to market equilibrium}
	\label{fig:reveqmoves}
\end{figure}
\par
If we look at the contour lines where supply equals demand for the three goods, we find Figure~\ref{fig:threecontours}. The green line shows where the apple market is in equilibrium and the red line is for pears. In contrast to Figure~\ref{fig:originalxs}, these two lines are not coincident, and we have a unique (stable) equilibrium. The irregular blue oval shows where demand and supply for real crowns are equal. All three lines intersect at the same point (Walras' Law).
\begin{figure}[H]
	\centering
	\includegraphics[scale=1.0, clip]{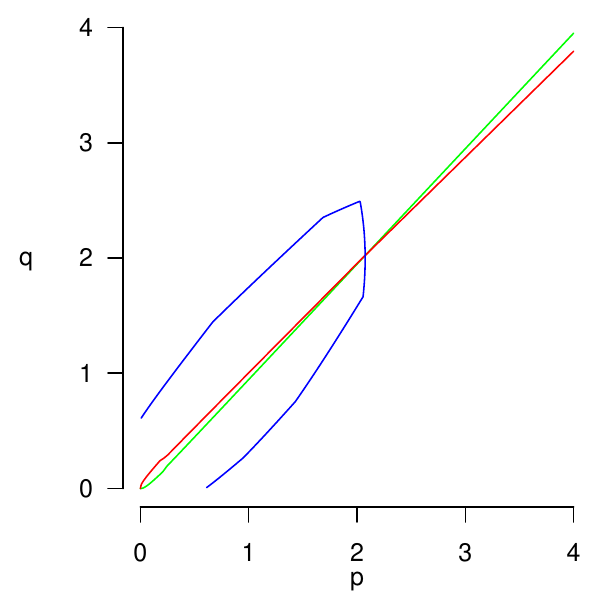}
	\caption{Equilibrium for three markets when $n = 6$ and $r = 1/5$}
	\label{fig:threecontours}
\end{figure}
Note that in order to calculate these three contour lines, we have to calculate, for each pair of $p$ and $q$ values, which of the three lines and two corner points in Figure~\ref{fig:tokens} is optimal for each of the three traders. (The Appendix describes how to do this.) The best choice for a trader changes as $p$ and $q$ vary, and this accounts for the lack of smoothness of the three lines.
\par
In this system, effectively the finance committee and the central bank impose a purchase tax and then divide the proceeds equally amongst the citizens -- although in fact the bank pays out the proceeds before collecting the tax. In general, the tax will have three effects:
\begin{enumerate}[label=(\alph*), topsep=0.0\baselineskip]
	\item it is somewhat redistributive as everyone receives the same income from the bank;
	\item people who wish to trade less than average are at an advantage because they have surplus tokens to sell;
	\item all buyers face higher prices, so trade will be restricted (as with any purchase tax).
\end{enumerate}
\par
We could of course have specified a sales tax (paid by the vendor) instead of a purchase tax (paid by the buyer), but this would essentially be exactly the same system. The sellers would add the tax they had to pay to the price, so we would get the same outcome from a tax rate of $r/(1 + r)$ on sales as of $r$ on purchases.
\par
We can think of the first system (without real crowns) as the limiting case when $n = r = 0$. In our example using the first system, apples and pears are equally priced, so we can measure wealth by total holdings of fruit. Table~\ref{tab:wealth} shows how wealth is changed by the system. $A$ and $C$ lose because they wish to trade more than the average, and $B$ gains because he does not want to trade. In the example, this effect is dominant.
\begin{table}[H]
	\centering
	\caption{Total holdings of goods}
	\begin{tabular}{d{4.1}d{4.1}d{4.1}d{4.1}}
		\toprule
		& \multicolumn{3}{c}{Trader} \\
		\cmidrule{2 - 4}
		\multicolumn{1}{c}{Situation} & \multicolumn{1}{c}{$A$} & \multicolumn{1}{c}{$B$} & \multicolumn{1}{c}{$C$} \\
		\midrule%
		\multicolumn{1}{r}{At start} & 120.0 & 100.0 & 80.0 \\
		\multicolumn{1}{r}{At end when $n = r = 0$} & 120.0 & 100.0 & 80.0 \\
		\multicolumn{1}{r}{At end when $n = 6$ and $r = 20\%$} & 119.4 & 102.5 & 78.1 \\
		\bottomrule
	\end{tabular}
	\label{tab:wealth}
\end{table}
Of course the theory assumes that traders care about utility rather than wealth. Table~\ref{tab:utility} shows how trade increases utility.
\begin{table}[H]
	\centering
	\caption{Utility changes}
	\begin{tabular}{d{3.3}d{3.2}d{3.2}d{3.2}}
		\toprule
		& \multicolumn{3}{c}{Trader} \\
		\cmidrule{2 - 4}
		\multicolumn{1}{c}{Situation} & \multicolumn{1}{c}{$A$} & \multicolumn{1}{c}{$B$} & \multicolumn{1}{c}{$C$} \\
		\midrule
		\multicolumn{1}{r}{At start} & 14.96 & 14.14 & 11.53 \\
		\multicolumn{1}{r}{At end when $n = r = 0$} & 15.49 & 14.14 & 12.65 \\
		\multicolumn{1}{r}{At end when $n = 6$ and $r = 20\%$} & 15.41 & 14.31 & 12.43 \\
		\bottomrule
	\end{tabular}
	\label{tab:utility}
\end{table}
\par
This new system is based on real money (crowns) and IOU notes for crowns. We have moved from ``money as debt for an imaginary commodity'' to ``money as debt for real crowns''. Instead of a gold standard, we have a crown standard. Like gold, we may assume that crowns are kept in a secure vault (or some modern cyberspace version), and all trading on the first day is done using promissory notes. The bank also receives its purchase tax in the form of notes on day one. On day two, all credits and debts are reconciled, and the bank gets back all its crowns from the vaults. The finance committee now has two parameters under its control: $r$, the purchase tax rate, and $n$, the initial endowment of crowns for each trader. (In our example $r = 1/5$ and $n = 6$.) Equilibrium prices (if they exist) will be directly proportional to $n$, so  in fact we have a quantity theory of money. It seems this system can control prices, but will there always be an equilibrium solution?
\subsection{Existence of equilibrium}
\label{subsec:existence}
Suppose there is a market equilibrium without using real money (i.e. when $n = r = 0$). Will there always be an equilibrium when $n$ and $r$ are given definite positive values? The answer is clearly no. Modify our example so that the traders all start with equal amounts of the two goods (so $A$ might start with $60$ apples and $60$ pears, $B$ with $50$ of each, and $C$ with $40$ of each). There is an equilibrium with equal prices, $p = q$, in which no one wishes to trade. But as soon as real crowns are introduced prices will be bid up indefinitely. There is no equilibrium short of $p = q = \infty$. If we keep $n$ and $r$ fixed, we can find various starting positions for the three traders near the main diagonal $y = x$ where there will be no equilibrium.
\par
However, if we start with a situation where there is an equilibrium involving some trading when $n = r = 0$, then under reasonable continuity assumptions we would expect to have a similar equilibrium for fixed $n$ and sufficiently small $r > 0$. Also, for sufficiently small $r$, we should be able to make price distortions and redistribution effects as small as we like. In this sense, the proposed mechanism does formally solve Hahn's problem.
\subsection{The need for debt}
\label{subsec:needdebt}
Now we have a system with real money (real crowns), do we still need extra money (in the form of debt as promissory notes)? We could imagine everyone starts with an account with a balance of $6$ crowns, and a debit card allowing payments whilst the account is in credit. Then, in our example with $r = 1/5$, trader $B$ can simply use his debit card firstly to buy pears from $C$ using $5$ of his crowns, and then to pay the remaining crown as tax to the bank. Then he has completed all his trades and his account is empty. $C$ now has $11$ crowns.
\par
But life is not so simple for $A$ and $C$. Suppose $A$ uses $5$ of his crowns to buy pears from $C$, and then pays the remaining crown as tax. $A$ now has no crowns and $C$ has $16$. Next $C$ can use all his crowns to buy apples from $A$ (paying the appropriate amount of tax): then $A$ can buy more pears from $C$, and so on. After an infinite number of transactions, $A$ will have spent $455/11$ crowns buying pears from $C$ (and paid $91/11$ crowns in tax). $C$ will have spent $480/11$ crowns buying apples from $A$ (and paid $96/11$ crowns in tax). Eventually the bank gets $17$ crowns back from $A$ and $C$ together, and already has $1$ from $B$, so all the original $18$ crowns are returned, and no account is ever in debt.
\par
However, if the traders are issued with credit cards instead of debit cards, after $B$ has bought pears from $C$, $A$ can use credit to spend $455/11$ crowns on pears from $C$ and to pay $91/11$ crowns in tax. She then has debt of $480/11$ (approximately $44$) crowns. Finally $C$ (who now has $576/11$ crowns) spends it all on apples from $A$ and tax. All accounts are now zero.
\par
We see that in order to have enough liquidity in the system, it is still essential to allow traders to create money in the form of debt.
\subsection{Taxation and public spending}
\label{subsec:taxation}
So far we have envisaged a central bank but no government spending. (The redistributive effect of the token system should be thought of as a by-product of the need to maintain a stable currency, not as a social policy.) However, suppose a government is elected which wishes to tax and spend. Imagine, for example, that there are are food shortages on a neighbouring island, and the government decides to raise a poll tax of $12$ crowns and spend all the tax revenue on apples to be given to the neighbour.
\par
We can think of this as an extension to a market with $4$ traders, $A$, $B$, and $C$ starting with $6$ crowns and a debt of $12$ crowns, and the government starting with credit of $36$ crowns. Now, for some prices, a trader might not be able to clear his debt to the government, even if he sells all his goods. However, in this example we find equilibrium prices greater than 1, which does allow everyone to clear all their debts. So, the analysis proceeds as before except that buying both apples and pears is no longer a possible strategy for a trader. In fact it must be replaced by the strategy of selling at least one good in sufficient quantities to clear the debt to the government. With this change, we find that we must have $p = 1.661$ and $q = 1.466$.
\par
We then find the revised demands for goods shown in Table~\ref{tab:taxes01}. Note that the government must pay purchase taxes to the central bank, just like any other trader. The government can redistribute wealth amongst the citizens, but it has no power over the central bank which sets $n$ and $r$.
\begin{table}[H]
	\centering 
	\caption{Equilibrium balance sheets with taxation}
	\begin{tabular}{d{3.2}d{3.2}d{3.2}d{3.2}d{3.2}d{3.2}}
		\toprule
		& \multicolumn{4}{c}{Trader}& \\
		\cmidrule{2 - 5}
		\multicolumn{1}{c}{Good} & \multicolumn{1}{c}{$A$} & \multicolumn{1}{c}{$B$} & \multicolumn{1}{c}{$C$} & \multicolumn{1}{c}{$G$} & \multicolumn{1}{c}{Total} \\
		\midrule%
		\multicolumn{1}{r}{Initial} & 6.00 & 6.00 & 6.00 & 0.00 & 18.00\\
		\midrule
		\multicolumn{1}{r}{Apples} & 48.55 & 6.00 & -24.55 & -30.00 & 0\\
		\multicolumn{1}{r}{Pears} & -35.45 & 0.00 & 35.45 & 0.00 & 0\\
		\multicolumn{1}{r}{Poll tax} & -12.00 & -12.00 & -12.00 & 36.00 & 0\\
		\multicolumn{1}{r}{Purchase tax} & -7.09 & 0.00 & -4.91 & -6.00 & -18.00\\
		\midrule
		\multicolumn{1}{c}{Final} & 0 & 0 & 0 & 0 & 0\\
		\bottomrule
	\end{tabular}
	\label{tab:taxes01}
\end{table}
In this solution, trader $B$ still maximises his utility at a corner point of his budget set (he sells only apples). We can recalculate Figure~\ref{fig:original}.
\begin{figure}[H]
	\centering
	\includegraphics[scale=1.0, clip]{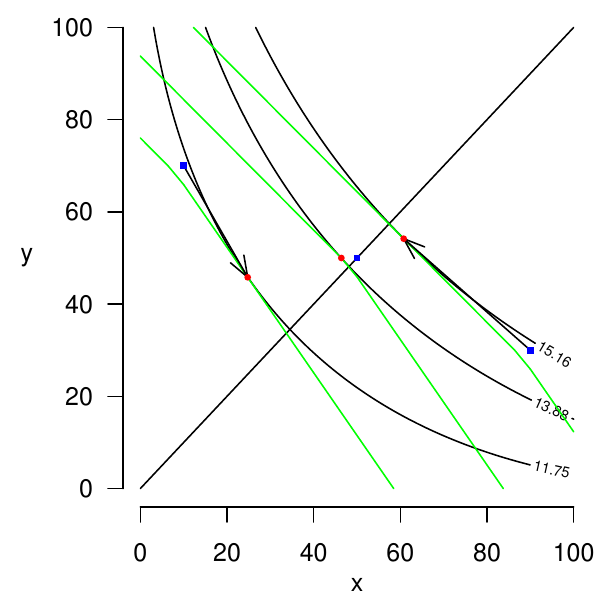}
	\caption{Movements to market equilibrium with taxation}
	\label{fig:reveqtaxmoves}
\end{figure}
\section{The Edgeworth box}
\label{sec:edgeworth}
Let us modify our example by removing trader $B$. With just two traders and two commodities, if we do not use real money we can represent the market in an Edgeworth box. Figure~\ref{fig:originaledge} shows the utility contours for the two traders (the second trader uses rotated axes). The contract curve (the locus of Pareto efficient points, here the line $y = x$) is the curve where the two contours have a common tangent. These tangents are shown as green lines. (Some authors restrict the contract curve to efficient points which are better than the initial position for both traders.) A point on the contract curve such that the common tangent line passes through the original holdings for the two traders is an equilibrium solution with the relative prices shown by the gradient of the tangent.
\begin{figure}[H]
	\centering
	\includegraphics[scale=1.0, clip]{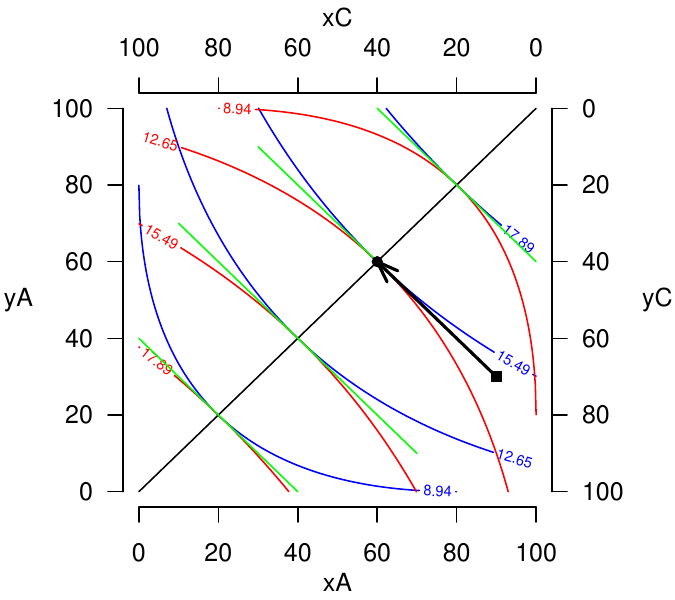}
	\caption{Edgeworth box} 
	\label{fig:originaledge}
\end{figure}
\par
How does this diagram change when we use tokens with the given values for $r$ ($20\%$) and $n$ ($6$)? We saw that each trader will have three budget constraints: Figure~\ref{fig:tokens} shows these for trader $A$, but clearly with just two traders we can find a solution only if $A$ moves to somewhere along the upper blue line and correspondingly $C$ to somewhere on the lower red constraint. If the prices of the goods are $p$ and $q$, the blue line has slope $p/(q(1 + r))$ and the red line $p(1 + r)/q$. The ratio of the slopes is $(1 + r)^{2}$. So for a point to be a possible solution, the slopes of the two utility curves at that point must have this ratio. So we replace the contract curve with a revised curve connecting points having this property, as shown in Figure~\ref{fig:tokensedge}.
\begin{figure}[H]
	\centering
	\includegraphics[scale=1.0, clip]{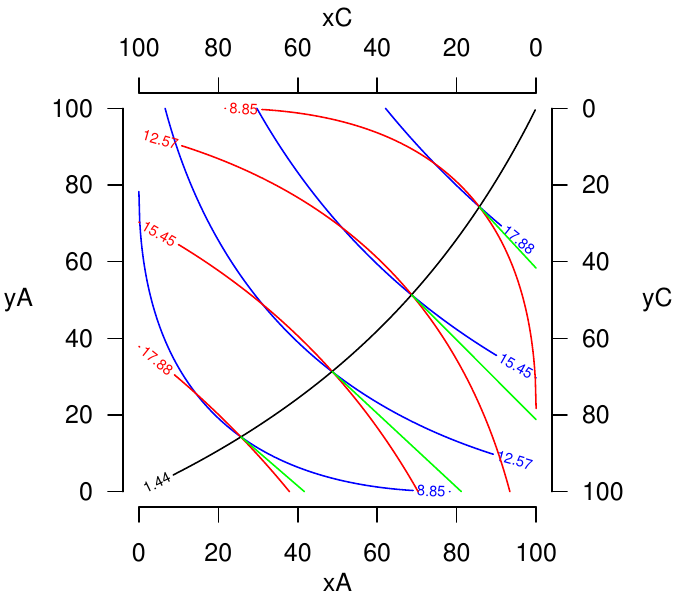}
	\caption{Revised Edgeworth box}
	\label{fig:tokensedge}
\end{figure}
\par
The green lines in the figure have slopes which are the geometric mean of the gradients of the two utility contours on the revised contract curve. We seek a point on the revised contract curve such that the corresponding green line passes through the players' starting position. Figure~\ref{fig:solutionedge} shows what we find.
\begin{figure}[H]
	\centering
	\includegraphics[scale= 1.0, clip]{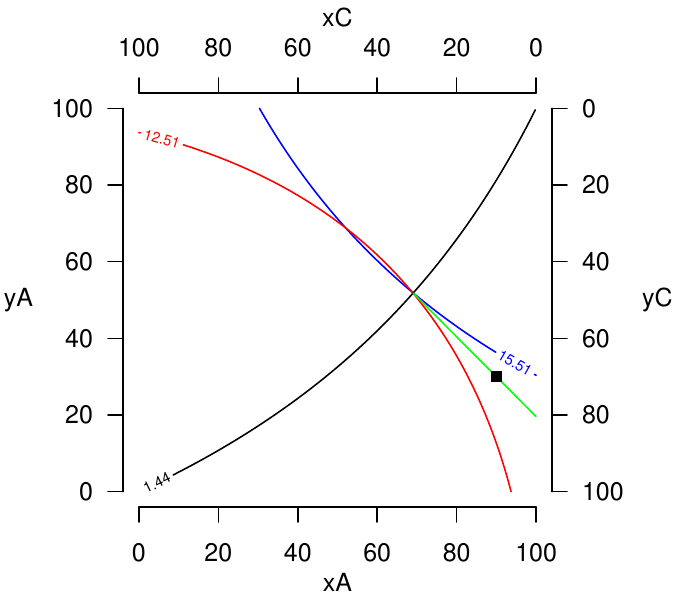}
	\caption{Solving revised Edgeworth box}
	\label{fig:solutionedge}
\end{figure}
\par
\begin{theorem}
	\label{thm:edgeworth}
	Suppose $F = (f_{1}, f_{2})$ is a point in the Edgeworth box at which the gradient of the utility curve of the first trader ($A$) is $-g$, and of the second trader ($C$) is $-h$, and such that $h/g = (1 + r)^{2}$ for some $r > 0$. Suppose that $S = (s_{1}, s_{2})$ is a starting position for the two traders with $s_{1} > f_{1}$ and $s_{2} < f_{2}$ and satisfying
	\begin{equation*}
		\frac{f_{2} - s_{2}}{f_{1} - s_{1}} = - \sqrt{gh} \text{.}
	\end{equation*}
	Then
	the market with a purchase tax rate $r$ has an equilibrium at $F$ in which the prices $p$ and $q$ for goods $1$ and $2$ (respectively) will be
	\begin{align*}
		p &= \frac{n}{r\left(s_{1} - f_{1}\right)}  \text{, and} \\
		q &= \frac{n}{r\left(f_{2} - s_{2}\right)}  \text{.}
	\end{align*}
\end{theorem}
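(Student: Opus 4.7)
The plan is to separate the theorem into two logical pieces: (i) a tangency/first-order argument yielding the price ratio from the local slopes at $F$, and (ii) a pair of budget-balance equations yielding the price levels from $n$ and $r$.

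For part (i), I would invoke the analysis preceding the theorem: a trader who sells good~1 and buys good~2 faces the budget line with slope $-p/(q(1+r))$, while one who sells good~2 and buys good~1 faces the line with slope $-p(1+r)/q$. Since $s_{1}>f_{1}$ and $s_{2}<f_{2}$, trader $A$'s net trade has the first pattern and $C$'s the second. Strictly convex preferences together with the hypothesis that $F$ is an interior optimum then force the utility contour to be tangent to the corresponding budget line at $F$, yielding $g=p/(q(1+r))$ and $h=p(1+r)/q$. Multiplying gives $gh=p^{2}/q^{2}$, hence $p/q=\sqrt{gh}$. Before declaring this an equilibrium I would need to confirm that the chosen budget line is the binding one out of the $2^{k}-1=3$ constraints a trader faces — i.e., the optimum is not at a corner where both goods are bought or both are sold. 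This is where I expect the main technical obstacle to lie, and I would handle it by noting that, given the sign pattern of the trade and strict quasi-concavity of $u$, the corner strategies (and the opposite sell/buy line) give strictly lower utility; this follows from the fact that the trader's overall opportunity set is the union of three line segments meeting at the endowment-plus-$n$ configuration, and the tangency at $F$ dominates.

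For part (ii), the hypothesis that the chord from $S$ to $F$ has slope $-\sqrt{gh}=-p/q$ immediately delivers the key identity
\begin{equation*}
q\,(f_{2}-s_{2})=p\,(s_{1}-f_{1}).
\end{equation*}
I would then write down the crown-balance accounts for both traders. Trader $A$ starts with $n$ crowns, receives $p(s_{1}-f_{1})$ from selling good~1, pays $q(f_{2}-s_{2})$ for good~2 and $rq(f_{2}-s_{2})$ in purchase tax, and must finish at zero. Substituting the identity above collapses the sale-receipt and purchase-cost terms, leaving $n=rq(f_{2}-s_{2})$, which rearranges to the stated formula for $q$. Running exactly the same argument for trader $C$ — whose sales are $f_{2}-s_{2}$ pears and whose purchases are $s_{1}-f_{1}$ apples, taxed at rate $r$ on the apple purchase — gives $n=rp(s_{1}-f_{1})$ and hence the stated formula for $p$.

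Finally, I would note consistency: market clearing is automatic in a two-agent Edgeworth box once both traders choose $F$, and Walras' Law for the three markets (apples, pears, crowns) is confirmed because the bank's receipts $r[p(s_{1}-f_{1})+q(f_{2}-s_{2})]=2n$ match the $2n$ crowns issued. No further verification should be required beyond the local-optimality check flagged above.
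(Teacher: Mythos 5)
Your proposal is correct and uses essentially the same ingredients as the paper's own proof: the slopes $-p/(q(1+r))$ and $-p(1+r)/q$ of the two relevant budget lines, tangency with the utility contours at $F$, the chord condition fixing $p/q=\sqrt{gh}$, and exhaustion of the $n$ tokens (sale proceeds covering the purchase, tax covering the endowment) fixing the levels. The only difference is direction -- the paper posits the stated prices and verifies that each trader's optimum is $F$, whereas you derive the prices as necessary conditions and then check optimality -- which is the same computation run the other way.
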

\begin{proof}
	The suggested prices are both positive. We saw in Section that when the first trader sells good $1$ in order to buy good $2$ he trades along the line
	\begin{equation*}
		px + q(1 + r)y = n \text{,}
	\end{equation*}
	where $x$ and $y$ are the quantities of apples and pears bought (respectively). If $x = f_{1} - s_{1}$ and $y = f_{2} - s_{2}$, we find that the equation holds, so $F$ is a point on the trading line.
	\par
	It will be the optimum point if the utility contour through $F$ is tangent to the line. But we know that
	\begin{equation*}
		\frac{p}{q} = \frac{f_{2} - s_{2}}{s_{1} - f_{1}} = \sqrt{gh} \text{.}
	\end{equation*}
	We also know that $h = (1 + r)^{2} g$, so that
	\begin{equation*}
		\frac{p}{q} = (1 + r)g \text{.}
	\end{equation*}
	But the trading line has slope
	\begin{equation*}
		- \frac{p}{q(1 + r)} = -g \text{,}
	\end{equation*}
	the same as the slope of first trader's utility contour at $F$. Hence he achieves his maximum possible utility (at the assumed prices) at $F$. And similarly for the second trader. So we have found an equilibrium.
\end{proof}
\begin{corollary}
	At the equilibrium, the prices $p$ and $q$ for goods $1$ and $2$ (respectively) satisfy $p/q = \sqrt{gh}$. \qed
\end{corollary}
The second corollary would be entirely obvious in a system without tokens.
\begin{corollary}
	The equilibrium solution involves each trader paying exactly as much to buy one good as he receives from selling the other.
\end{corollary}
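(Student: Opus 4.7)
The proposal is to extract the claim directly from the equilibrium trade described in the proof of Theorem~\ref{thm:edgeworth}, by computing the gross revenue from the sale and the gross cost of the purchase and showing they coincide.

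First I would fix attention on the first trader; the second is symmetric. From the theorem we know that in equilibrium this trader moves from $S = (s_1, s_2)$ to $F = (f_1, f_2)$, so he sells $s_1 - f_1 > 0$ units of good 1 and buys $f_2 - s_2 > 0$ units of good 2. The revenue from his sale is therefore $p(s_1 - f_1)$, while the amount he pays to the seller for good 2 (not counting the purchase tax, which is paid to the bank, not to the other trader) is $q(f_2 - s_2)$.

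Next I would combine the two price identities already established in the theorem statement and its proof, namely
\begin{equation*}
\frac{p}{q} = \sqrt{gh} \qquad \text{and} \qquad \frac{f_2 - s_2}{s_1 - f_1} = \sqrt{gh}.
\end{equation*}
Equating the two right-hand sides immediately gives $p(s_1 - f_1) = q(f_2 - s_2)$, which is the claim. By symmetry (the second trader sells good 2 and buys good 1 in exactly reversed quantities) the same balance holds for trader $C$.

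There is no real obstacle here: the corollary is essentially algebraic bookkeeping on top of the theorem. The only mildly interesting observation worth mentioning in passing is the consistency check with part~(ii) of the theorem: the first trader's budget identity $p\,\delta x_{1} + q(1+r)\,\delta x_{2} = n$, once we substitute $\delta x_1 = -(s_1-f_1)$ and $\delta x_2 = f_2 - s_2$ and use the equality just proved, collapses to $q r (f_2 - s_2) = n$, recovering $q = n / (r(f_2 - s_2))$. Thus the $n$ tokens each trader is endowed with are spent exactly on the purchase tax, while the sale proceeds cover the purchase price gross of tax in full.
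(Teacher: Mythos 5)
Your proof is correct and takes essentially the same route as the paper: the paper's one-line proof simply asserts that $p(s_1 - f_1)$ equals $q(f_2 - s_2)$, and your derivation via the two identities $p/q = \sqrt{gh} = (f_2 - s_2)/(s_1 - f_1)$ is exactly the reason that assertion holds. Your closing consistency check against part~(ii) matches the remark the paper makes immediately after the corollary (each trader is left owing the bank exactly $n$ tokens of tax).
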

\begin{proof}
	At the equilibrium player $A$ receives $p(s_{1} - f_{1})$ from selling good $1$, and pays $C$ exactly the same amount ($q(f_{2} - s_{2})$) to purchase good $2$.
	\par
	So each is left with $n$ tokens after these trades. They each owe the bank $rp(s_{1} - f_{1}) = rq(f_{2} - s_{2}) = n$ tokens.
\end{proof}
\par
In our example, $F = (69.03, 51.80)$. At $F$, $g = 0.866$, and $h = 1.247$ so that $h/g = 1.44 = 1.2^{2}$. The geometric mean of the two slopes is $1.040$. This is equal to the slope between $S = (90, 30)$ and $F$, so Theorem~\ref{thm:edgeworth} applies, and there will be an equilibrium with $p = 30/20.97 = 1.43$ and $q = 30/21.80 = 1.38$. Player $A$ will sell $20.97$ apples and buy $21.80$ pears. She receives $30$ crowns for the apples, and she started with $6$ crowns. The pears cost her $30$ crowns (paid to the seller), but she must give $6$ crowns to the bank. Her budget is balanced. Similarly for trader $C$.
\par
Clearly there will be a second revised contract curve lying above the line $y = x$ where $h/g = 1/(1 + r)^{2}$. Figure~\ref{fig:solutionregion} shows the two curves as blue lines. The red lines show the corresponding curves when $r = 1/10$.
\begin{figure}[H]
	\centering
	\includegraphics[scale=1.0, clip]{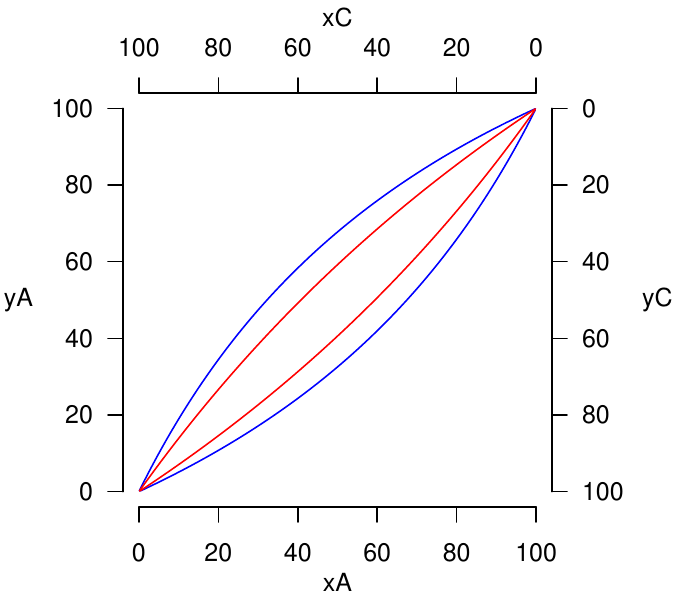}
	\caption{Solution regions for $r = 1/5$ and $r = 1/10$}
	\label{fig:solutionregion}
\end{figure}
In the lens-shaped region between the two curves there will be no equilibrium solution (or at least none with positive prices). We noted this possibility in Section~\ref{subsec:existence}.
\section{Two periods}
Suppose that the group of people is now told that in fact it will be two weeks before they are rescued. After one week there will be an airdrop of fresh supplies, again with one box of supplies for each person. The boxes will be individually assigned, so once again the traders will have property rights. They need to trade in the second period as in the first, although the goods they are trading may have changed.
\par
The real money system can easily cope with this. Keeping the same values of $n$ and $r$, a fresh supply of crowns is issued by the bank at the start of the second week. However, there is now a possible difficulty. Traders who are rich in the first week may decide to insure against being poor in the second week by not using all their crowns in the first week. This would result in less crowns being available in week one and more in week two. So the price of goods would be lower in the first week than in the second. Realising this, sensible rich traders in week one should not hoard crowns, but rather lend them to poor people, the debt to be cleared (with interest) in the second week. Then in each period the same number of crowns is used, although there is some further redistribution of wealth at the start of the second week.
\par
This process would be helped if the finance committee and the central bank announce that they plan to control prices so as to produce a slight inflation in the second period. The idea would be to keep $r$ the same but make a small increase in $n$ for week two. Everyone would then anticipate a small increase in prices, and money markets would develop allowing rich people to save money in the first period and have it repaid with interest in the second. So all the crowns issued in the first period will be used in the first period (none will be hoarded `under the bed'), and the central bank retains firm control of the price level in each period.
\par
An alternative approach would be to give tokens a limited life, so that week one tokens have no value in week two. Either way, it seems a good idea to ensure that money is never a store of value, and the use of controlled inflation seems simpler.
\par
Clearly this system can be extended to more than two time periods, or to economies with no final time period. In the latter situation, a pool of tokens circulates indefinitely (slightly increasing each time period), with the velocity of circulation of tokens between the traders and the central bank strictly controlled (in each period a token moves from the bank to a trader, and then returns). This is why prices are now determined by the quantity of tokens in the economy.
\section{Related work}
\label{sec:related}
\par
Before the rigour of general equilibrium theory, the idea that the quantity of money determines prices dates back at least to Copernicus, and was notably revived and restated in the last century by Milton Friedman.  An alternative idea that fiat money has value because the government will accept it for the payment of taxes goes back at least to Adam Smith, and is supported by \cite{Starr2013}, who relates it to general equilibrium theory and to the more recent trading post games, and gives a clear account of current work.
Even in his seminal paper \cite{Hahn1965} suggested a solution to his own problem. Suppose all traders wish to insure against not being able to make the trades they want at the auctioneer's suggested market prices. Suppose they can only insure with the government, and the government redistributes the proceeds of its insurance business to the traders. Then under certain assumptions an equilibrium will exist with money having a non-zero value. But Hahn did not see why, even supposing traders wished to take out such insurance, agents other than the government could not offer the same service. He was thinking in terms of a model rather than a mechanism.
\par
\cite{vonWeizsacker1974}, \cite{Howard1976}, and \cite{Lerner1979} (see also \cite{LernerColander1980}) all suggested introducing tradeable permits or certificates to control either price levels (von Weizs{\"a}cker) or value-added (Howard and Lerner). \cite{Layard1982} advocated a counter-inflation tax. These proposals all took a macro-economic viewpoint.
\par
At about the same time, at the micro-level, \cite{ShapleyShubik1977} introduced the idea of replacing a Walrasian market with a non-cooperative game (a Strategic Market Game) where players exchange money for goods at trading posts. There is now an extensive literature on these games. Especially relevant to our mechanism are the papers by \cite{DubeyGeanakoplos1992, DubeyGeanakoplos2003b, DubeyGeanakoplos2006}. These describe an SMG in which price levels are determinate. Traders start with commodities and `outside money', but if they need more money they can borrow `inside money' from a central bank. After using the money (the sole medium of exchange) to trade, they must repay what they have borrowed to the central bank plus an interest charge. In equilibrium, all the outside money must leave the system as interest payments, so if an interest rate is set, prices will have to adjust to make this happen. There are obvious differences between the Dubey and Geanakoplos papers and our system. They are trying to model an economy; we are suggesting a mechanism. They use trading posts; we do not. We provide new economic control variables (the number of tokens issued per head and the purchase tax rate); they do not. They insist traders use real money, of which there is a limited supply and which can be used only once in a time period; we allow traders to create their own money, and allow money to circulate.
\par
However there are also significant similarities between their system and ours (and also Hahn's). Our tokens play the same role as the outside money, and our purchase tax percentage mirrors the interest rate set in the Dubey and Geanakoplos model or the insurance premium in Hahn's story. But note that with our system, a futures market in tokens could determine an interest rate separate from the inflation rate controlled by the central bank.
\section{Summary and Conclusions}
We have devised an exchange economy in which
\begin{itemize}
	\item traders are initially supplied with `gold standard' money in the form of tokens;
	\item they then trade using either tokens or promissory notes for tokens;
	\item all trades of money for goods incur a purchase tax which must be paid in tokens.
\end{itemize}
This system gives a determinate price level without relying on any one good as num\'eraire, even in a one-period case without access to external money.
It also gives the financial authorities extra levers with which to control the economy.
\newpage
\begin{appendix}
	\renewcommand{\theequation}{A.\arabic{equation}}
	\section*{Appendix: Algorithm to calculate results}%
	\label{app:program}
	We suppose there are $k$ traders indexed by $i$. There are two goods. Trader $i$ has initial holdings of $s_{i}$ of the first good and $t_{i}$ of the second. All traders also start with $n$ tokens and have to pay a purchase tax at rate $r$. All traders have the same square root utility function: $\sqrt{x} + \sqrt{y}$ for a trader with holdings of $x$ units of the first good and $y$ of the second. The first good has price $p$ and the second $q$.
	\par
	We seek points $(x, y)$ such that the utility contour passing through the point has gradient $-g$. So starting from
	\begin{equation}
		\label{eq:constu}
		\sqrt{x} + \sqrt{y} = \text{constant,}
	\end{equation}
	we differentiate to find
	\begin{align}
		\frac{1}{2\sqrt{x}} + \frac{1}{2\sqrt{y}}\frac{dy}{dx} = 0 \text{,}
		\\
		\frac{1}{\sqrt{x}} - \frac{g}{\sqrt{y}} = 0 \text{,}
		\\
		\label{eq:gradg}
		y = g^{2} x \text{.}
	\end{align}
	\par
	Trader $i$ has three constraints (see Section~\ref{fig:tokens} and Figure~\ref{fig:tokens}). The first of these (the solid line in the figure) which we will call constraint $E$ is
	\begin{equation}
		\label{eq:constE}
		(1 + r)p\left(x - s_{i}\right) + (1 + r)q\left(y - t_{i}\right) \le n \text{.}
	\end{equation} 
	The other two constraints (the upper and lower dashed lines) which we will call $D$ and $F$ are
	\begin{align}
		\label{eq:constsDF}
		p\left(x - s_{i}\right) + (1 + r)q\left(y - t_{i}\right) &\le n \text{,}
		\\
		(1 + r)p\left(x - s_{i}\right) + q\left(y - t_{i}\right) &\le n \text{.}
	\end{align}
	So we can now calculate by using equation~\eqref{eq:gradg} the points the trader would move to if each constraint were the only restriction.
	\par
	So for constraint $E$ when $g(E) = p/q$ we find he would move to $\left(u_{i}(E), v_{i}(E)\right)$ where
	\begin{align}
		\label{eq:moveE}
		u_{i}(E) &= \frac{1}{g(E)}\left(\frac{(1 + r)\left(ps_{i} + qt_{i}\right) + n}{(1 + r)(p + q)}\right)  \text{, and}
		\\
		v_{i}(E) &= g(E)\left(\frac{(1 + r)\left(ps_{i} + qt_{i}\right) + n}{(1 + r)(p + q)}\right)  \text{.}
	\end{align}
	\par
	Similarly for constraint $D$ (when $g = p/((1 + r)q)$), or $F$ (when $g = (1 + r)p/q$), we find he would move to $\left(u_{i}(D), v_{i}(D)\right)$ or $\left(u_{i}(F), v_{i}(F)\right)$ respectively, where
	\begin{align}
		\label{eq:moveDF}
		u_{i}(D) &= \frac{1}{g(D)}\left(\frac{ps_{i} + (1 + r)qt_{i} + n}{p + (1 + r)q}\right)  \text{,}
		\\
		v_{i}(D) &= g(D)\left(\frac{ps_{i} + (1 + r)qt_{i} + n}{p + (1 + r)q}\right)\text{,}
		\\
		u_{i}(F) &= \frac{1}{g(F)}\left(\frac{(1 + r)ps_{i} +qt_{i} + n}{(1 + r)p +q}\right)  \text{,}
		\\
		v_{i}(F) &= g(F)\left(\frac{(1 + r)ps_{i} +qt_{i} + n}{(1 + r)p +q}\right)  \text{.}
	\end{align}
	\par
	The best strategy for $i$ will either be one of these three points, or else one of the two corner point solutions. Constraints $D$ and $E$ intersect at
	\begin{align}
		\label{eq:moveDE}
		u_{i}(DE) &= s_{i} \text{, and}
		\\
		v_{i}(DE) &= t_{i} + \frac{n}{(1 + r)q}  \text{.}
	\end{align}
	Similarly, $E$ and $F$ intersect at
	\begin{align}
		\label{eq:moveEF}
		u_{i}(EF) &= s_{i} + \frac{n}{(1 + r)p} \text{, and}
		\\
		v_{i}(EF) &= t_{i} \text{.}
	\end{align}
	\par
	Because we have only two goods, it is easy to see geometrically which of the five possible solutions will be feasible and optimal for player $i$. If $u_{i}(D) \le s_{i}$, then the solution will be $\left(u_{i}(D), v_{i}(D)\right)$. Similarly, if $v_{i}(F) \le t_{i}$, then the solution will be $\left(u_{i}(F), v_{i}(F)\right)$. If $u_{i}(D) > s_{i}$ but $u_{i}(E) \le s_{i}$, then the solution will be the $DE$ corner point. If $v_{i}(F) > t_{i}$ but $v_{i}(E) \le t_{i}$, then the solution will be the $EF$ corner point. Finally, if none of these apply, the solution will be $\left(u_{i}(E), v_{i}(E)\right)$. So we have the following algorithm to calculate the final holdings $\left(u_{i}(p, q), v_{i}(p, q)\right)$ of the two goods for player~$i$.
	\par
	\IncMargin{1em}
	\begin{algorithm}[H]
		\label{alg:bestoffive}
		\SetKwInOut{Input}{input}\SetKwInOut{Output}{output}
		\Input{ Prices $p$, $q$}
		\Output{ Final holdings $u_{i}$ of good $1$ and $v_{i}$ of good $2$ for $i = 1, \ldots, k$}
		\BlankLine
		\For{$i \leftarrow 1$ \KwTo $k$}{
			\If{$u_{i}(D)(p, q) < s_{i}$}{
				$u_{i} \leftarrow u_{i}(D)(p, q)$, $v_{i} \leftarrow v_{i}(D)(p, q)$\\
				\uElseIf{$v_{i}(F)(p, q) < t_{i}$}{
					$u_{i} \leftarrow u_{i}(F)(p, q)$, $v_{i} \leftarrow v_{i}(F)(p, q)$\\
					\uElseIf{$u_{i}(E) < s_{i}$}{
						$u_{i} \leftarrow u_{i}(DE)(p, q)$, $v_{i} \leftarrow v_{i}(DE)(p, q)$\\
						\uElseIf{$v_{i}(E) < t_{i}$}{
							$u_{i} \leftarrow u_{i}(EF)(p, q)$, $v_{i} \leftarrow v_{i}(EF)(p, q)$\\
							\Else{
								$u_{i} \leftarrow u_{i}(E)(p, q)$, $v_{i} \leftarrow v_{i}(E)(p, q)$
		}}}}}}
		\caption{Calculating $u_{i}$ and $v_{i}$}
	\end{algorithm}
	\DecMargin{1em}
	\par
	Hence for given prices we can find the demand for each good from each trader. Summing over traders we find the total excess demand (positive or negative) for each good for prices $p$ and $q$.
	\par
	We can now plot the two lines where demand equals supply (see Figure~\ref{fig:threecontours}), and solve numerically for the crossing point(s). Alternatively we can minimise the sum of squares of the excess demands. Hence we find solution prices. We can also see which of the five solution points each player uses at the solution, and this may lead to a closed form solution: thus in our example the solution value for $q$ is the root of a cubic.
	\par
	The results given in the paper are now easily calculated. For the taxation example (Subsection~\ref{subsec:taxation}), we add the government as a fourth player. We must also modify strategy $E$, as indicated in the subsection, and recalculate the corner solutions $DE$ and $EF$. Algorithm~\ref{alg:bestoffive} must also be adjusted: for example, strategy $D$ is now chosen when $v_{i}(D) \ge t_{i}$. For the Edgeworth box example (Section~\ref{sec:edgeworth}), we simply reduce the number ($k$) of players to $2$.
\end{appendix}
\newpage
\singlespacing
\bibliographystyle{ecta}            %
\bibliography{Exchange}

\providecommand{\MR}[1]{}
\begin{thebibliography}{11}
\newcommand{\enquote}[1]{``#1''}
\expandafter\ifx\csname natexlab\endcsname\relax\def\natexlab#1{#1}\fi

\bibitem[\protect\citeauthoryear{Dubey and Geanakoplos}{Dubey and
  Geanakoplos}{1992}]{DubeyGeanakoplos1992}
\textsc{Dubey, P. and J.~Geanakoplos} (1992): \enquote{The value of money in a
  finite-horizon economy : a role for banks,} in \emph{Economic analysis of
  markets and games : essays in honor of Frank Hahn}, ed. by P.~Dasgupta,
  D.~Gale, O.~Hart, and E.~Maskin, Cambridge: MIT Press.

\bibitem[\protect\citeauthoryear{Dubey and Geanakoplos}{Dubey and
  Geanakoplos}{2003}]{DubeyGeanakoplos2003b}
---\hspace{-.1pt}---\hspace{-.1pt}--- (2003): \enquote{Monetary equilibrium
  with missing markets,} \emph{J. Math. Econom.}, 39, 585--618, special issue
  on strategic market games.

\bibitem[\protect\citeauthoryear{Dubey and Geanakoplos}{Dubey and
  Geanakoplos}{2006}]{DubeyGeanakoplos2006}
---\hspace{-.1pt}---\hspace{-.1pt}--- (2006): \enquote{Determinacy with nominal
  assets and outside money,} \emph{Econom. Theory}, 27, 79--106.

\bibitem[\protect\citeauthoryear{Hahn}{Hahn}{1965}]{Hahn1965}
\textsc{Hahn, F.~H.} (1965): \enquote{On Some Problems of Proving the Existence
  of an Equilibrium in a Monetary Economy,} in \emph{The Theory of Interest
  Rates}, ed. by F.~H. Hahn and F.~P.~R. Brechling, St Martin's Press, 126 --
  35.

\bibitem[\protect\citeauthoryear{Howard}{Howard}{1976}]{Howard1976}
\textsc{Howard, J.~V.} (1976): \enquote{A Method of Controlling Inflation,}
  \emph{Economic Journal}, 86, 832--44.

\bibitem[\protect\citeauthoryear{Layard}{Layard}{1982}]{Layard1982}
\textsc{Layard, R.} (1982): \emph{More Jobs, Less Inflation}, London : Grant
  McIntyre.

\bibitem[\protect\citeauthoryear{Lerner}{Lerner}{1979}]{Lerner1979}
\textsc{Lerner, A.~P.} (1979): \enquote{MAP: the market mechanism cure for
  stagflation,} \emph{Atl. Econ. J.}, 7, 12--20.

\bibitem[\protect\citeauthoryear{Lerner and Colander}{Lerner and
  Colander}{1980}]{LernerColander1980}
\textsc{Lerner, A.~P. and D.~C. Colander} (1980): \emph{MAP : a market
  anti-inflation plan}, New York : Harcourt Brace Jovanovich.

\bibitem[\protect\citeauthoryear{Shapley and Shubik}{Shapley and
  Shubik}{1977}]{ShapleyShubik1977}
\textsc{Shapley, L.~S. and M.~Shubik} (1977): \enquote{Trade using one
  commodity as a means of payment,} \emph{Journal of Political Economy}, 85,
  937--968.

\bibitem[\protect\citeauthoryear{Starr}{Starr}{2013}]{Starr2013}
\textsc{Starr, R.~M.} (2013): \emph{Why Is There Money?: Walrasian General
  Equilibrium Foundations of Monetary Theory}, Edward Elgar Pub.

\bibitem[\protect\citeauthoryear{von Weizs{\"a}cker}{von
  Weizs{\"a}cker}{1974}]{vonWeizsacker1974}
\textsc{von Weizs{\"a}cker, C.~C.} (1974): \emph{Political limits of
  traditional stabilization policy}, vol.~19 of \emph{Working papers: Institut
  f{\"u}r Mathematische Wirtschaftsforschung}, Rheda-Wiedenbr{\"u}ck.

\end{thebibliography}
 \end{document}